\newtheorem{proposition}{Proposition}
\newtheorem{definition}{Definition}
\newtheorem{remark}{Remark}
\renewcommand{\phi}{\varphi}
\renewcommand{\>}{\rangle}
\newcommand{\<}{\langle}
\newcommand{\ket}[1]{|#1\>}
\newcommand{\bra}[1]{\<#1|}
\newcommand{\be}{\begin{equation}}
\newcommand{\ee}{\end{equation}}
\newcommand{\bea}{\begin{eqnarray}}
\newcommand{\eea}{\end{eqnarray}}
\newcommand{\Int}{\mathbb{Z}}
\renewcommand{\phi}{\varphi}
\begin{document}

\title[Computational indistinguishability and  boson sampling]{Computational indistinguishability and  boson sampling 
\footnote{This paper is dedicated to Igor Jex on the occasion of his 60th birthday.}
}

\author{Georgios M. Nikolopoulos}

\address{Institute of Electronic Structure and Laser, Foundation for Research and Technology - Hellas (FORTH),  GR-70013 Heraklion, Greece}
%\ead{nikolg@iesl.forth.gr}
\vspace{10pt}
\begin{indented}
\item[]\today
\end{indented}

\begin{abstract}
We introduce a  computational problem of distinguishing between the output of an ideal  coarse-grained boson sampler and the  output of a true random number generator, as a resource for cryptographic schemes, which are secure against computationally unbounded adversaries. Moreover, we define a cryptographic setting for the implementation of 
such schemes, including message encryption and authentication, as well as 
entity authentication. 
\end{abstract}

%
% Uncomment for keywords
\vspace{2pc}
\noindent{\it Keywords}: boson sampling, cryptography

% Uncomment for Submitted to journal title message
%\submitto{}
%
% Uncomment if a separate title page is required
%\maketitle
% 
% For two-column output uncomment the next line and choose [10pt] rather than [12pt] in the \documentclass declaration
%\ioptwocol
%

\section{Introduction}

Computational indistinguishability plays a fundamental role in cryptography \cite{GoldreichBook}. Two probability distributions are considered to be computationally indistinguishable, if no efficient procedure can tell them apart. In particular, consider 
an efficient algorithm, which takes as input a string that has been drawn from one of the distributions under consideration. If the probability for the algorithm to accept a string that has been drawn from  one of the distributions 
is very close to the probability for  the same algorithm to accept a string taken from the other distribution, then 
the two distributions are considered to be indistinguishable. As far as  practical applications is concerned, 
objects that are computationally indistinguishable can be considered equivalent, because practical applications 
involve efficient algorithms and they cannot distinguish these objects.

Our main task here is to introduce a computational indistinguishability problem based on ideal coarse-grained boson sampling. Boson sampling (BS) is a computational problem, which is commonly believed to be of particular relevance in the quest for attaining the milestone of quantum advantage  \cite{BSintro,BSreview,BSreview2}. 
In a photonic setting, the problem pertains to sampling from the probability distribution of $N$ photons (bosons in general), which have 
gone through a random passive linear interferometer of $M>N$ modes, 
described by an $M\times M$ randomly chosen unitary transformation $\hat{\cal U}$.   
Since the formalization of the problem by Aaronson and Arkhipov \cite{Aaronson2013}, there have been great efforts by experimentalists to extend the scale of standard BS  \cite{Oxford_2013,Vienna&Jena_2013,Roma&Milano_2013,Roma&Milano_Birthday_2013,Brisbane_2013,Roma&Milano_validation_2014,Bristol_Haar_random_2014,Bristol_Universal_2015,Vienna&Jena_2015,Roma&Milano_2016,Brisbane_QD_2017,USTC_time_bin_2017,USTC_QD_2017,USTC_Lossy_2018,Oxford2019,Collisionfree,USTC2019} as well as its variants \cite{Roma&Milano_Scattershot_2015,USTC_Scattershot_2018,Jex17,Jex19,Bristol_Gaussian_2019,timestamp}.
The primary objective of all of these implementations has  been the  demonstration of  advantage of quantum computers over their classical counterparts, leaving aside other possible applications of BS, such as the ones discussed in Refs. \cite{Vibronic,Spin,Graph1,Graph2,Graph3,Nikolopoulos2016,Nikolopoulos2019,Feng-etal20,Shi-etal22}. 

In coarse-grained BS,  the set of all possible boson configurations is partitioned into $d$ 
disjoint subsets (to be referred to hereafter as bins). The problem under consideration is associated with the quest for the most-probable bin $\kappa$, in the coarse-grained boson distribution defined by an 
input boson configuration ${\bm s}$ and the unitary $\hat {\cal U}$. 
It is shown that, without knowledge of the input configuration, the probability distribution of 
the most-probable bin (MPB) $p_{\rm BS}(\kappa)$ can be exponentially close to the uniform distribution $p_{\rm uni}:=1/d$,  and thus a computationally unbounded adversary cannot distinguish between the two. 
Subsequently, we propose a specific cryptographic setting involving a trusted key-distribution center, which exploits the aforementioned computational indistinguishability, and allows for the implementation of various cryptographic tasks  between two honest users, including secure communication, entity authentication, and data-origin authentication. 

The paper is organized as follows. In section \ref{sec2} we give the necessary formalism for coarse-grained boson sampling, while section \ref{sec3} is devoted to computational indistinguishability. In section \ref{sec4} we define a 
cryptographic setting for applications, and a summary with concluding remarks is given in section \ref{sec5}.

\section{Formalism}
\label{sec2}

We begin with a brief formulation of ideal coarse-grained BS, which follows closely the one in  Refs. \cite{Nikolopoulos2016,Nikolopoulos2019}, and the interested reader may refer to these papers (and to the references therein) for more information. 
The problem of BS can be analyzed  in the framework of 
\bea
 |{\mathbb S}| = \left(
      \begin{array}{c}
        M+N-1 \\
        N  
      \end{array} \right) \sim \left (\frac{M}{N}\right )^N 
\label{H-size2}
\eea  
boson configurations $\{{\bf s}^{(0)}, {\bf s}^{(1)},\ldots\} := {\mathbb S}$.  The asymptotic value of $|{\mathbb S}|$ holds for 
$M\gg N$ and reflects the exponential growth of the space with the parameters $\{M,N\}$. 
The $M\times M$ unitary transformation $\hat{\cal U}$ is chosen randomly according to the Haar measure, and relates   
the bosonic annihilation/creation operators for the output modes to  
the corresponding operators for the input modes. 
There is a natural homomorphism between $\hat{\cal U}$ and 
the unitary operator $\hat{\cal V}$, which acts on $N$-boson quantum states, and thus one can use the two representations interchangeably. 
The $i$th input(output) boson configuration ${\bf s}^{(i)}\in{\mathbb S}$, is a tuple of $M$ distinct positive integers i.e., ${\bf s}^{(i)}:=(s_{1}^{(i)},\ldots,s_{M}^{(i)})$, where $s_{j}^{(i)}\in[1,N]$ refers to the number of bosons at the $j$th  mode.  
The set ${\mathbb S}$ is fully determined by the parameters $\{M,N\}$. 

Given an input configuration ${\bm s}\in{\mathbb S}$ (to be referred to hereafter as seed), the probability for obtaining 
configuration ${\bm r}$ at the output, 
is given by $Q({\bm r}|{\bm s};\hat{\cal U})\propto |{\rm Per}(\hat{\cal U}_{{\bm s},{\bm r}})|$, 
where ${\rm Per}(\hat{\cal U}_{{\bm s},{\bm r}})$ is the permanent of an $N\times N$ submatrix 
of $\hat{\cal U}$, which  is determined by the occupied input and output modes. 
The set of all possible boson configurations at the output ${\bm r} \in {\mathbb S}$,  together with the corresponding 
probabilities, define the distribution series ${\mathscr Q}({\bm s};\hat{\cal U}) := \{Q({\bm r}|{\bm s};\hat{\cal U})~|~{\bm r} \in {\mathbb S} \}$, which describes fully the problem of boson sampling for the given seed and unitary. 
Changing the seed and/or the unitary one has a different  distribution series, and thus a different sampling problem.

The set of boson configurations can be partitioned equally into $d$ disjoint subsets i.e., 
${\mathbb S} =  \bigcup_{l=0}^{d-1} {\mathbb B}_l$, so that 
all the bins have nearly the same size, and they may differ by at most one configuration, 
when the remainder of the 
division $R=|{\mathbb S}|/d\neq 0$. More precisely, the size of the $l$th bin is given by  
\bea
|{\mathbb B}_l|=\left \lfloor \frac{|{\mathbb S}|}{d}\right \rfloor + z_l,
\label{BinSize:eq}
\eea 
where $z_l$ is 1 for $l<R$ and 0 otherwise. 
The number of bins $d$ is assumed to scale polynomially with $\{M,N\}$, and at the same time, 
it  is  sufficiently large so that bosonic-interference effects survive binning, and they are reflected in the coarse-grained  distribution ${\mathscr P}({\bm s};\hat{\cal U}):= \{P(b|{\bm s};\hat{\cal U})~|~b \in \Int_d \}$. This distribution series 
defines fully the coarse-grained boson sampling for the given seed, unitary, and binning scheme.

\begin{remark}
\label{remark1}
One may also label the possible bins in terms of $n$-bit strings, 
where $n=\lfloor \log_2(d)\rfloor +1$ is the number of bits required for the 
unambiguous identification of a bin. 
\end{remark}

Let $\kappa$ be the label of the MPB in the distribution series ${\mathscr P}({\bm s};\hat{\cal U})$. 
The corresponding probability is given by 
\[
 P(\kappa|{\bm s};{\cal U}) = \sum_{{\bm r}\in{\mathbb B_\kappa}}  Q({\bm r}|{\bm s};\hat{\cal U}),
\]
where the summation is over all the output configurations that constitute the bin $\mathbb B_\kappa$. 
It is important to keep in mind, that for fixed binning scheme and fixed unitary, the MPB $\kappa$ in the distribution ${\mathscr P}({\bm s};\hat{\cal U})$ depends on the input configuration ${\bm s}$.  
An honest user who knows the seed and can search for the MPB by sampling from the right coarse-grained distribution, 
has an advantage over a potential adversary who does not know the seed, and  the coarse-grained distribution. 
This advantage has been used as a basis for the design of a one-way  function, which relies on coarse-grained BS \cite{Nikolopoulos2016,Nikolopoulos2019}. 

At this stage we have defined the general theoretical framework. A summary of the involved parameters, together with the underlying assumptions, are given in table \ref{tab1}. In the following section we discuss a computational distinguishability problem, 
which relies on the quest of the MPB in a coarse-grained BS distribution.

\begin{table}\caption{Summary of parameters and main assumptions or constraints.} \label{tab1}
%\begin{ruledtabular}
\begin{tabular}{lcccc}
\br
Parameter & Public  &  Private & Assumptions/Constraints \\
\mr
Number of photons $N$ &  yes & no & $N\gg 1$\\ 
Number of modes $M$ &   yes & no & $M\gg N$\\ 
Set of boson configurations ${\mathbb S}$ & yes & no & $|{\mathbb S}|\sim (M/N)^N$\\
Number of bins $d$ & yes & no & $d\sim {\rm poly} (M,N)$\\
Unitary $\hat{\cal U}$ &  yes & no & Haar random\\ 
Seed ${\bm s}$ &  no & yes & ${\bm s}\in {\mathbb S}$\\
Most-probable bin (MPB) $\kappa$ & maybe & maybe &    $\kappa\in \Int_d$ \\
BS distribution ${\mathscr Q}({\bm r}|{\bm s};\hat{\cal U})$ &  no & yes & ${\bm r}\in {\mathbb S}$\\
Coarse-grained BS distribution ${\mathscr P}(b|{\bm s};\hat{\cal U})$ &  no & yes & 
$b\in \Int_d$ \\ 
\br
\end{tabular}
\end{table}

\normalsize

\section{Computational indistinguishability based on boson sampling}
\label{sec3}

As shown in table \ref{tab1}, throughout this work  all of the relevant parameters (including the binning scheme) are publicly known, apart from the seed ${\bm s}$. A BS-based distinguishability problem ({\tt BSDP}) can be defined as follows.

\begin{definition} {\rm ({\tt BSDP}):} For fixed and known $\{N, M, \hat{\cal U}\}$, and a given integer $\kappa \in\Int_d$, a verifier  has to decide which of the following is true: 
\\
(a) $\kappa$ has been chosen at random from a uniform distribution over $\Int_d$.\\
(b) $ \kappa$ refers to the MPB of a coarse-grained BS distribution ${\mathscr P}({\bm s};\hat{\cal U})$, 
for some ${\bm s}\in {\mathbb S}$. 
\end{definition}

For either of the two possible scenarios, $\kappa$ is a random integer. In scenario (a) it has a uniform distribution over 
$\Int_d$  i.e., the corresponding probability distribution is 
\bea
p_{\rm uni}(\kappa) = \frac{1}{d}. 
\label{p_uni:eq}
\eea
In scenario (b),   $\kappa$ has been obtained by  binning the data of a BS session 
with an unknown randomly chosen seed ${\bm s}$. 
From the verifier's point of view, who does not have access to ${\bm s}$, 
the corresponding probability distribution is 
\bea
p_{{\rm BS}}(\kappa)  = \sum_{{\bm s}\in {\mathbb S}} p({\bm s})P(\kappa|{\bm s};\hat{\cal U}),
\label{p_bs:eq}
\eea 
where $p({\bm s})$ is the probability distribution for the random seed ${\bm s}$. 
In this equation we have used the chain rule as well as the fact that the unitary is publicly known and fixed, while 
the seed is chosen at  random and independently. 
{\tt BSDP} essentially asks the verifier to distinguish between the 
probability distributions (\ref{p_uni:eq}) and (\ref{p_bs:eq}), 
based on the given integer ${\kappa}$, and all of the publicly known 
parameters $\{M,N,d;\hat{\cal U}\}$ \footnote{Note that this problem can be also viewed as  a decision problem \cite{Nikolopoulos2016}, because the verifier has to decide whether the  given  sequence $\kappa$, has been drawn from a uniform distribution or not. 
In this case, {\tt BSDP} accepts two possible answers namely, {\tt YES} or {\tt NO}. }.   
We will show the following proposition. 
\begin{proposition}
\label{prop1}
In the case of ideal BS, the distributions $p_{\rm BS}$ and  $p_{\rm uni}$ cannot be distinguished with probability better than an exponentially small number $\epsilon$, even by 
a computationally unbounded verifier. 
\end{proposition}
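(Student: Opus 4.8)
The plan is to reduce the indistinguishability claim to a bound on the statistical distance between the two candidate distributions, and then to show that this distance is exponentially small by combining an exact symmetry argument for its mean with a concentration estimate for its fluctuations.

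First I would invoke the information-theoretic characterization of optimal distinguishing: for a verifier with unlimited computational power facing the two equiprobable scenarios (a) and (b), the best achievable success probability is $\frac{1}{2}\bigl(1+\Delta\bigr)$, where
\[
\Delta := \frac{1}{2}\sum_{\kappa\in\Int_d}\bigl|\,p_{\rm BS}(\kappa)-p_{\rm uni}(\kappa)\,\bigr|
\]
is the total variation (statistical) distance between $p_{\rm BS}$ and $p_{\rm uni}$. The optimal test is the likelihood-ratio test, and no procedure can beat this bound however much computation it expends. The proposition is therefore equivalent to $\Delta\le\epsilon$ with $\epsilon$ exponentially small, and everything reduces to proving that the seed-averaged bin probability $p_{\rm BS}(\kappa)$ of (\ref{p_bs:eq}) lies exponentially close to $p_{\rm uni}(\kappa)=1/d$ for every label $\kappa\in\Int_d$.

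Next I would pin down the mean by exploiting the invariance of the Haar measure. Permuting the $M$ modes amounts to multiplying $\hat{\cal U}$ by a permutation matrix, an operation under which the Haar measure is invariant while its induced action merely relabels the output configurations, and hence the bins. Averaging the map $\hat{\cal U}\mapsto p_{\rm BS}(\kappa)$ over this symmetry---equivalently, over the Haar ensemble from which $\hat{\cal U}$ is drawn---therefore assigns the same expected weight to every bin; since the $p_{\rm BS}(\kappa)$ sum to unity, that common value is exactly $1/d$. What remains is to show that for a single, typical realization of $\hat{\cal U}$ the quantity $p_{\rm BS}(\kappa)$ does not stray far from this mean.

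This concentration step is where the real work lies, and I expect it to be the main obstacle. The leverage comes from the enormous gap between the exponentially large configuration space, $|{\mathbb S}|\sim(M/N)^{N}$, and the merely polynomial number of bins, $d\sim{\rm poly}(M,N)$: the seed average in (\ref{p_bs:eq}) is a sum of $|{\mathbb S}|$ terms, each of order $1/|{\mathbb S}|$, so one anticipates that fluctuations of $p_{\rm BS}(\kappa)$ about $1/d$ are suppressed exponentially in $N$. Turning this heuristic into a proof calls for a genuine concentration inequality---either a Lipschitz (Levy-type) bound on the unitary group, showing that $\hat{\cal U}\mapsto p_{\rm BS}(\kappa)$ varies slowly enough to concentrate about its Haar mean, or a direct large-deviation estimate on the binned permanents governing the competition among bins for the dominant weight. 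Either route should deliver $|p_{\rm BS}(\kappa)-1/d|\le\epsilon/d$ uniformly in $\kappa$ with $\epsilon$ exponentially small, whence $\Delta\le\epsilon$ and the claim follows. The delicate point is that the permanents entering $Q({\bm r}|{\bm s};\hat{\cal U})$ are strongly correlated across outputs, so the effective number of independent degrees of freedom---and thus the precise exponent controlling $\epsilon$---must be extracted with care rather than read off from a naive independence count.
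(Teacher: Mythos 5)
Your reduction to the statistical distance is the same opening move as the paper's, but from there your plan diverges and, as written, contains a genuine gap: the concentration estimate that you yourself flag as ``where the real work lies'' is never carried out --- no Levy-type Lipschitz bound, no large-deviation control of the correlated permanents, nothing beyond a heuristic count of terms. More importantly, that entire step is unnecessary, because you missed an exact identity: for a \emph{uniform} seed, $p({\bm s})=1/|{\mathbb S}|$, the completeness relation and unitarity give, for \emph{every} fixed unitary,
\[
\sum_{{\bm s}\in{\mathbb S}}\frac{1}{|{\mathbb S}|}\left|\bra{\bm r}\hat{\cal V}\ket{\bm s}\right|^{2}
=\frac{1}{|{\mathbb S}|}\,\bra{\bm r}\hat{\cal V}\,\hat{\mathbb I}\,\hat{\cal V}^{\dag}\ket{\bm r}
=\frac{1}{|{\mathbb S}|},
\]
so that summing over ${\bm r}\in{\mathbb B}_{\kappa}$ in Eq.~(\ref{p_bs:eq}) yields $p_{\rm BS}(\kappa)=|{\mathbb B}_{\kappa}|/|{\mathbb S}|$ \emph{deterministically}. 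In other words, the uniform mixture of seed states is the maximally mixed state on the $N$-boson space, which is invariant under $\hat{\cal V}$; the fluctuations about the Haar mean that you propose to tame by concentration of measure are identically zero. The only deviation of $p_{\rm BS}$ from $p_{\rm uni}$ comes from the rounding of bin sizes in Eq.~(\ref{BinSize:eq}) --- each bin differs from $|{\mathbb S}|/d$ by at most one configuration --- which immediately gives ${\cal D}<d/(2|{\mathbb S}|)=:\epsilon$, exponentially small since $d\sim{\rm poly}(M,N)$ while $|{\mathbb S}|\sim(M/N)^{N}$.

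Two further points. First, your symmetry argument for the mean is itself shaky: mode permutations do not act transitively on output configurations (they preserve the multiset of occupation numbers), and since the partition into bins ${\mathbb B}_{l}$ is a fixed, arbitrary coarse-graining, a permutation of configurations does not merely ``relabel the bins''; pinning the Haar mean at $|{\mathbb B}_\kappa|/|{\mathbb S}|$ actually requires the irreducibility of the symmetric-power representation (Schur's lemma), not permutation invariance alone. Second, even if you completed the concentration step, you would obtain only a statement holding with high probability over Haar-random $\hat{\cal U}$, whereas the paper's argument is stronger and far more elementary: the bound holds for every unitary, Haar randomness plays no role in the proof, and the whole of $\epsilon$ is accounted for by bin-size rounding.
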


\begin{proof}
The distributions $p_{\rm BS}$ and  $p_{\rm uni}$ are over the same space $\Int_d$, and their 
statistical distance is defined as
\bea
{\cal D}:=\frac{1}2\sum_{\kappa\in \Int_d} \left | p_{\rm uni}(\kappa) - p_{\rm BS}(\kappa)\right |.
\label{stat_dist:eq}
\eea
It is known that when the distance between two 
distributions over the same space is upper bounded by $\epsilon\ll 1$, a computationally unbounded verifier cannot distinguish them with probability better then 
$\epsilon$ \cite{GoldreichBook,CryptoDistance}. So, to prove proposition \ref{prop1}, it suffices to show that ${\cal D} < \epsilon$, for some exponentially small $\epsilon\ll 1$.  

The conditional probability entering Eq. (\ref{p_bs:eq}), is the probability for the  
binned data of the BS session, to result in the MPB  with label $\kappa$ (i.e., the bin ${\mathbb B}_{\kappa}$). It can be related to the 
probabilities  entering the underlying 
BS distribution  ${\mathscr Q}({\bm r}|{\bm s};\hat{\cal U})$ as follows
 \bea
p_{\rm BS}(\kappa) 
&=& 
 \sum_{{\bm s}\in {\mathbb S}} p({\bm s})
 \sum_{{\bm r}\in{\mathbb B}_{\kappa}}
 Q({\bm r}|{\bm s};\hat{\cal U}) 
\nonumber \\
&=&
\sum_{{\bm r}\in{\mathbb B}_{\kappa}} \left [ 
 \sum_{{\bm s}\in {\mathbb S}} p({\bm s}) 
  Q({\bm r}|{\bm s};\hat{\cal U}) \right ]
 \nonumber \\
&=& \sum_{{\bm r}\in{\mathbb B}_{\kappa}} \left [ 
 \sum_{{\bm s}\in {\mathbb S}} p({\bm s}) 
  \left | \bra{\bm r}{\hat{\cal{V}}}\ket{\bm s}\right |^2 \right ].
\label{app2_uni2a}
\eea
In the last equation we have taken advantage of the 
natural  homomorphism between $\hat{\cal U}$ and $\hat{\cal V}$. 

If all the possible seeds are equally probable, 
the corresponding probability distribution is uniform i.e.,  $ p({\bm s})  = 1/|{\mathbb S}|$. 
Using the completeness relation we have 
\bea
p_{\rm BS}(\kappa) 
&=& \frac{1}{|{\mathbb S}|} 
\sum_{{\bm r}\in{\mathbb B}_{\kappa}} 
\bra{\bm r}
\hat{\cal V}
\left [ 
\sum_{{\bm s}\in{\mathbb S}}
\ket{\bm s}
\bra{\bm s}
\right ]
\hat{\cal V}^\dag
\ket{\bm r}
\nonumber\\
&=&\frac{1}{|{\mathbb S}|}  \sum_{{\bm r}\in{\mathbb B}_{\kappa}}  \bra{\bm r}
\hat{\cal V}
\hat{\mathbb I}
\hat{\cal V}^\dag
\ket{\bm r} =\frac{|{\mathbb B}_{\kappa}|}{|{\mathbb S}|},
\label{app2_uni1}
\eea
which shows that $p_{\rm BS}(\kappa)$ is  independent of $\hat{\cal U}$, and it is determined only by the size of the MPB relative to 
the size of the entire Hilbert space for $N$ bosons and $M$ modes.
Using Eqs. (\ref{app2_uni1}) and (\ref{BinSize:eq}) in Eq. (\ref{stat_dist:eq})  we have 
\bea
{\cal D}< \frac{d}{2|{\mathbb S}|}:=\epsilon
\label{epsilon:eq}
\eea
In view of Eq. (\ref{H-size2}), $|{\mathbb S}|$ scales exponentially with $(M,N)$, whereas   
$d\sim {\rm poly}(M,N)$. So, there is certainly a regime of values for $(M,N)$, where ${\cal D}$ is bounded by an 
 exponentially small quantity $\epsilon\ll 1$. In this case, 
the distributions $p_{\rm BS}$ and  $p_{\rm uni}$ cannot be distinguished with probability better than 
$\epsilon$, even by a computationally unbounded verifier. 
\end{proof}
  
Truly random dits can be generated only when we sample from the ideal uniform distribution $p_{\rm uni}$. The fact that $p_{\rm BS}$  is exponentially close to 
$p_{\rm uni}$, allows for the generation of random dits  (or equivalently random sequences of bits) 
by sampling from $p_{\rm BS}$. 
Typically, the performance of random number generators is quantified by the 
Shannon entropy  
\bea
{\mathscr H}_d(p):= -\sum_{x\in\Int_d} p(x)\log_d [p(x)]
\eea 
for some discrete distribution $p(x)$ on $\Int_d$.  
For an ideal uniform distribution one would have ${\mathscr H}_d(p_{\rm uni}) = 1\,{\rm dit}$.  
How close to ${\mathscr H}_d(p_{\rm uni}) $ is the Shannon entropy for random dits that 
are generated by sampling from $p_{\rm BS}$? 

\begin{proposition}
\label{prop2}
The Shannon entropy of distribution $p_{\rm BS}$ satisfies 
\bea
|{\mathscr H}_d(p_{BS}) - {\mathscr H}_d(p_{\rm uni}) | <  \frac{2d\log(d)}{\sqrt{|{\mathbb S}|}},
\label{H-dif_b:eq}
\eea
or equivalently  
\bea
|{\mathscr H}(p_{BS}) - {\mathscr H}(p_{\rm uni}) | <  \frac{2d}{\sqrt{|{\mathbb S}|} }.
\label{H-dif2_b:eq}
\eea
\end{proposition}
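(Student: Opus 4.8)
The plan is to treat the Shannon entropy as a continuous functional of the distribution and to convert the closeness of $p_{\rm BS}$ and $p_{\rm uni}$ already established in Proposition \ref{prop1} into closeness of their entropies. Two ingredients are in hand: from Eq. (\ref{app2_uni1}) the BS-induced distribution is explicitly $p_{\rm BS}(\kappa)=|{\mathbb B}_\kappa|/|{\mathbb S}|$, and from Eq. (\ref{epsilon:eq}) the statistical distance obeys ${\cal D}<\epsilon=d/(2|{\mathbb S}|)$. Since $p_{\rm uni}$ maximises ${\mathscr H}_d$ over $\Int_d$, the quantity inside the absolute value is nonnegative, ${\mathscr H}_d(p_{\rm uni})-{\mathscr H}_d(p_{\rm BS})\ge 0$, so it suffices to upper bound this single difference.

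First I would invoke a quantitative continuity bound for the Shannon entropy of the Fannes--Audenaert type: for two distributions on a $d$-element alphabet at statistical distance ${\cal D}$ one has $|{\mathscr H}(p)-{\mathscr H}(q)|\le {\cal D}\log(d-1)+H_2({\cal D})$, with $H_2$ the binary entropy. Substituting ${\cal D}<\epsilon=d/(2|{\mathbb S}|)$ from Eq. (\ref{epsilon:eq}), the first term contributes only $O(d\log d/|{\mathbb S}|)$, which is negligible. The decisive term is the second: near the origin $H_2$ is \emph{not} Lipschitz, and the correct estimate is the H\"older-type bound $H_2(x)\le 2\sqrt{x(1-x)}\le 2\sqrt{x}$. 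Feeding in $x=\epsilon\sim d/|{\mathbb S}|$ yields a contribution $\sim\sqrt{d/|{\mathbb S}|}$, i.e. precisely the $1/\sqrt{|{\mathbb S}|}$ tail of (\ref{H-dif_b:eq}), up to the polynomial-in-$d$ prefactor. The second form (\ref{H-dif2_b:eq}) then follows by the elementary relation ${\mathscr H}_d={\mathscr H}/\log d$ between the base-$d$ and natural entropies, which rescales the right-hand side by $\log d$.

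A cleaner, self-contained alternative that I would present in parallel avoids the continuity lemma and exploits the explicit bin sizes. Writing $\delta_\kappa:=p_{\rm BS}(\kappa)-1/d$, Eq. (\ref{BinSize:eq}) gives $|{\mathbb B}_\kappa|=\lfloor|{\mathbb S}|/d\rfloor+z_\kappa$ with $z_\kappa\in\{0,1\}$, whence $|\delta_\kappa|\le 1/|{\mathbb S}|$ and $\sum_\kappa\delta_\kappa=0$. Because the reference distribution is uniform, the entropy gap is exactly a relative entropy, ${\mathscr H}(p_{\rm uni})-{\mathscr H}(p_{\rm BS})=D(p_{\rm BS}\|p_{\rm uni})$, which I would bound by the $\chi^2$-divergence, $D(p\|q)\le\sum_\kappa(p_\kappa-q_\kappa)^2/q_\kappa=d\sum_\kappa\delta_\kappa^2$. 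Using $\sum_\kappa\delta_\kappa^2\le d/|{\mathbb S}|^2$ gives $|{\mathscr H}(p_{\rm BS})-{\mathscr H}(p_{\rm uni})|\le d^2/|{\mathbb S}|^2$, which sits \emph{far} inside the stated bound once one recalls that $|{\mathbb S}|$ grows exponentially while $d\sim{\rm poly}(M,N)$ (Eq. (\ref{H-size2})); hence the proposition holds with room to spare.

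The main obstacle is purely the modulus-of-continuity issue in the first route: one must resist bounding $H_2(\epsilon)$ linearly (which fails near $0$) and instead use the square-root estimate, and one must track the logarithm base so that the two displayed forms (\ref{H-dif_b:eq}) and (\ref{H-dif2_b:eq}) stay mutually consistent. The direct $\chi^2$ route sidesteps this entirely and is the one I would actually certify, using the continuity bound only to explain why the weaker $1/\sqrt{|{\mathbb S}|}$ scaling is already more than sufficient.
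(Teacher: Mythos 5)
Your proposal is correct, and the route you actually certify is genuinely different from the paper's. The paper follows your first route in spirit: it feeds the statistical-distance bound ${\cal D}<\epsilon=d/(2|{\mathbb S}|)$ from Proposition \ref{prop1} into the entropy-continuity bound $|{\mathscr H}(p)-{\mathscr H}(q)|\le -2{\cal D}\log(2{\cal D}/d)$ (Csisz\'ar--K\"orner Lemma 2.7, equivalently Cover--Thomas Theorem 16.3.2, under the proviso $\epsilon\le 1/4$), uses monotonicity in ${\cal D}$ to obtain the intermediate bound $d\log(|{\mathbb S}|)/|{\mathbb S}|$, and then converts this to the stated $2d/\sqrt{|{\mathbb S}|}$ via the elementary inequality $\ln(1+x)\le x/\sqrt{1+x}$ --- precisely the square-root-modulus phenomenon you flag; your Fannes--Audenaert variant with $H_2(x)\le 2\sqrt{x}$ lands at $\approx\sqrt{2d/|{\mathbb S}|}$, slightly weaker than the paper's intermediate bound but comfortably inside the target, so that route goes through as you describe. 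Your $\chi^2$ route, by contrast, has no analogue in the paper and is both more elementary and far stronger: since the proof of Proposition \ref{prop1} gives $p_{\rm BS}(\kappa)=|{\mathbb B}_\kappa|/|{\mathbb S}|$ exactly [Eq. (\ref{app2_uni1})], Eq. (\ref{BinSize:eq}) yields $|\delta_\kappa|\le 1/|{\mathbb S}|$, and the exact identity ${\mathscr H}(p_{\rm uni})-{\mathscr H}(p_{\rm BS})=D(p_{\rm BS}\|p_{\rm uni})$ together with $D\le\chi^2$ gives $O(d^2/|{\mathbb S}|^2)$, which beats the stated $2d/\sqrt{|{\mathbb S}|}$ by a huge margin and dispenses with both the statistical-distance detour and the $\epsilon\le 1/4$ proviso. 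Two small points to tidy: the paper's ${\mathscr H}$ is base $2$, not natural, so $D\le\chi^2$ (exact in nats) picks up a factor $\log_2 e$ in bits, harmlessly absorbed by your slack; and the two displayed forms are not literally equivalent under ${\mathscr H}_d={\mathscr H}/\log d$ --- dividing (\ref{H-dif2_b:eq}) by $\log d$ gives $2d/(\log(d)\sqrt{|{\mathbb S}|})$, which implies (\ref{H-dif_b:eq}) only because $\log_2 d\ge 1$ for $d\ge 2$ --- but this looseness is the paper's, not yours.
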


\begin{proof}
First of all, note that ${\mathscr H}_d(p) = {\mathscr H}(p) [\log(d)]^{-1}$, where everything on the rhs is for logarithm in base 2. Hence, we may consider  only one of the relations, let us say equation (\ref{H-dif2_b:eq}). 
In view of Proposition 1, we will focus on a regime where the statistical distance of the distributions 
$p_{\rm BS}$ and $p_{\rm uni}$ is upper bounded by $\epsilon\ll 1$. 
Assuming $\epsilon\leq 1/4$, we can use Lemma 2.7  in Ref. \cite{CK}  or Theorem 16.3.2 of Ref. \cite{EIT}  to obtain
\bea
|{\mathscr H}(p_{BS}) - {\mathscr H}(p_{\rm uni}) | \leq -2{\cal D} \log \left (\frac{2{\cal D}}{d} \right ).
\eea

For $d\gg 1$, the function on the r.h.s. of the inequality increases monotonically with ${\cal D}$ in $(0,1)$. 
Hence,  
\bea
|{\mathscr H}(p_{BS}) - {\mathscr H}(p_{\rm uni}) | < -2\epsilon \log \left (\frac{2\epsilon}{d} \right ), 
\eea
because ${\cal D} < \epsilon$.
Using the definition of $\epsilon$ [see Eq. (\ref{epsilon:eq})], we obtain 
\bea
|{\mathscr H}(p_{BS}) - {\mathscr H}(p_{\rm uni}) | < d \frac{\log(|{\mathbb S}|)}{|{\mathbb S}|}.
\label{prop2:ineq1}
\eea

Let us recall now that $|{\mathbb S}|\sim M^N \gg 1$ while 
$\log(|{\mathbb S}|)< \log(1+|{\mathbb S}|)$. 
Using the  inequality 
\bea
\ln(x+1)\leq \frac{x}{\sqrt{x+1}} \Rightarrow x^{-1}\log(x+1)\leq \frac{\log(e) }{\sqrt{x+1}}, 
\eea
we obtain 
\bea
\frac{\log(|{\mathbb S}|)}{|{\mathbb S}|}< \frac{2}{\sqrt{|{\mathbb S}|+1}} < \frac{2}{\sqrt{|{\mathbb S}|}} . 
\eea 
Equation (\ref{H-dif2_b:eq}) follows immediately from the last inequality and inequality (\ref{prop2:ineq1}).
\end{proof}

We have proved that by choosing $d$ to scale polynomially with $(M,N)$, and much slower than  $\sqrt{|{\mathbb S}|}$,  we have 
\[
|{\mathscr H}_d(p_{BS}) - {\mathscr H}_d(p_{\rm uni}) | \to 0.
\]
That is, as we increase the size of the Hilbert space, the entropy of  random dits that are generated from $p_{BS}$ converges to the entropy of truly random dits. 

\begin{remark}
\label{remark2}
For the sake of concreteness, all of the aforementioned discussion as well as the next section, have been given in the context of random integers. However, they can be also rephrased in terms of $n-$bit strings (see remark \ref{remark1}).  
\end{remark}

\section{A cryptographic setting with a trusted key-distribution center}
\label{sec4}

The computational indistinguishability discussed in the previous section, opens up the way for 
cryptographic applications. Consider for instance two honest users Alice (A) and Bob (B), and a trusted key-distribution center (KDC). 
The two users are connected to the KDC via  authenticated secure classical channels. Alice and Bob independently produce  
sequences of random dits 
\numparts 
\bea
\label{rand_dit:eqA}
&&\kappa_1^{(\rm A)}, \kappa_2^{(\rm A)}, \ldots, \kappa_j^{(\rm A)},\ldots\\
 &&\kappa_1^{(\rm B)}, \kappa_2^{(\rm B)}, \ldots,\kappa_j^{(\rm B)},\ldots,
 \label{rand_dit:eqB}
\eea
\endnumparts 
pertaining to independent Haar random, but publicly known, unitaries $\hat{\cal U}_{\rm A}$ and 
$\hat{\cal U}_{\rm B}$, respectively. In particular, the $j$th dits in the two sequences refer to the MPBs of the 
coarse-grained distributions ${\mathscr P}(b|{\bm s}_j^{\rm (A)};\hat{\cal U}_{\rm A})$ and  
${\mathscr P}(b|{\bm s}_j^{\rm (B)};\hat{\cal U}_{\rm B})$, where ${\bm s}_j^{\rm (A)}$ and ${\bm s}_j^{\rm (B)}$ 
have been chosen by Alice and Bob at random and independently, from a uniform distribution 
over ${\mathbb S}$. 
Alice and Bob keep track of their sequences of random seeds 
\numparts 
\bea
&&{\bm s}_1^{(\rm A)}, {\bm s}_2^{(\rm A)}, \ldots, {\bm s}_j^{(\rm A)},\ldots\\
 &&{\bm s}_1^{(\rm B)}, {\bm s}_2^{(\rm B)}, \ldots,{\bm s}_j^{(\rm B)},\ldots,
 \eea
\endnumparts 
while they send their dits [equations (\ref{rand_dit:eqA})-(\ref{rand_dit:eqB})]  to the KDC, where they are stored 
in a secure database in the form shown in table \ref{tab2}. The symbol $\oplus$ denotes addition modulo $d$.  

\Table{\label{tab2} The database of the trusted key-distribution center.} 
\br
Index & & Joint Key   \\
\mr
1 &   & $K_1 = \kappa_1^{(\rm A)}\oplus \kappa_1^{(\rm B)} $\\ 
2 & &  $K_2 =\kappa_2^{(\rm A)}\oplus \kappa_2^{(\rm B)}$ \\ 
$\vdots$ & & $\vdots$ \\
$j$ &  &$K_j=\kappa_j^{(\rm A)}\oplus \kappa_j^{(\rm B)}$ \\ 
$\vdots$ & & $\vdots$ \\
\br
\endTable

\normalsize

It is worth emphasizing here that the storage of the individual keys $\kappa_j^{\rm (A)}$ and $\kappa_j^{\rm (B)}$ is information-theoretically secure by virtue of the one-time pad encryption. 
Even if an adversary has access to the  database of the KDC, the extraction of the individual keys from the joint keys $K_j$ is impossible, and he cannot do better than random guessing. 
This is because each individual key has been chosen independently and at random from a distribution, which is computationally indistinguishable from uniform, according to the discussion in section \ref{sec3}. 
However, in order to ensure the security of the database, each entry has to be used only once i.e.,  
it has to be removed from the database immediately after its use. 

\subsection{Message encryption}
\label{sec4a}
Let us assume now that Alice wants to send a secret message $m\in\Int_d$ to Bob. 
To this end, the two users contact the KDC and they receive 
the joint key for this communication, say $K_j$, 
chosen at random from the list of available keys. As soon as the joint key is sent to the 
users, the corresponding entry is removed from the database, in order to ensure the one-time-pad character of the keys. 
Having received the joint key, Alice can calculate $\kappa_j^{\rm (A)}$ by sampling from   
${\mathscr P}(b|{\bm s}_j^{\rm (A)};\hat{\cal U}_{\rm A})$, because she knows the associated 
seed ${\bm s}_j^{\rm (A)}$ that has been used for its generation at first place. Similarly, Bob can extract $\kappa_j^{\rm (B)}$. 
Alice encrypts her message as follows
\bea
C_j = m\oplus K_j\oplus \kappa_j^{(\rm A)} =  m\oplus \kappa_j^{(\rm B)}. 
\eea
The ciphertext $C_j$ is sent to Bob, who can recover the message by  adding 
his key $\kappa_j^{\rm (B)}$ to the ciphertext. 
The particular communication scheme is information theoretically secure by virtue of the one-time pad encryption 
with the secret random key $ \kappa_j^{(\rm B)}$, which has a distribution  close to uniform.   

\subsection{Entity authentication}
\label{sec4b}
Another cryptographic task that can be implemented in the particular setting is entity authentication. 
Following the ideas discussed in Ref. \cite{EAP} Alice can obtain assurances about the identity of Bob, working as follows. She chooses at random a message $m$ from $\Int_d$, she encrypts it with $\kappa_j^{(\rm A)} $ 
and sends the ciphertext $C_j = m\oplus\kappa_j^{(\rm A)} $ to Bob, asking him to decrypt it and announce the output. 
If Bob is the one he claims to be, then he has access to ${\bm s}_j^{\rm (B)}$. So, 
he can calculate $\kappa_j^{\rm (B)}$ and through it he can calculate Alice's key 
$\kappa_j^{(\rm A)} = \kappa_j^{(\rm B)} \oplus K_j$. 
Hence, he can recover the random message that Alice has sent to him, and in this way Alice obtains assurance about Bob's true identity. 

\subsection{Data origin authentication}
\label{sec4c}
Let us assume that Alice and Bob share a common secret random key $k$ and Alice
 wants to send an authenticated message $m\in {\mathbb M}$ to Bob.  To this end, 
she evaluates the tag $t = h(k,m)$, 
where $h$ is a publicly known function, and let ${\mathbb T}$ denote the set of all possible different tags.  
The message and the tag $(m,t)$ are sent to Bob over a classical channel, and in general, as a result of forgery or noise, he 
will receive $(m^\prime,t^\prime)$. Using his key, Bob calculates $ h(k,m^\prime)$ and accepts the message only
if $t^\prime =h(k,m^\prime) $. 

It is well known that a $1$-time $2/|{\mathbb T}|$-secure MAC can be obtained by means 
of the Wegman-Carter construction, provided that the  common secret key $k$ is of length at least 
$4|\log_2({\mathbb T})|$ (see Ref. \cite{MAC} and references therein).  The construction relies on a strongly universal hash function $h_{k}$ \cite{book3,book4,WC81}, 
which is chosen at random from a class of such functions, and it is identified uniquely by the shared secret key $k$. 

In the cryptographic framework under consideration, Alice and Bob can establish a common secret key, 
by working as in the previous subsections. More precisely, they receive a randomly chosen joint key $K_j$ from the 
KDC. Subsequently,  Alice calculates $\kappa_j^{\rm (A)}$ and adds it to the joint key in order to recover  
Bob's key $\kappa_j^{\rm (B)} = K_j\oplus \kappa_j^{\rm (A)}$. Bob  can also extract 
$\kappa_j^{\rm (B)}$ by sampling from ${\mathscr P}(b|{\bm s}_j^{\rm (B)};\hat{\cal U}_{\rm B})$, 
because he knows ${\bm s}_j^{\rm (B)}$. 
Hence, in the particular example, the common secret key to be used for the authentication is $k=\kappa_j^{\rm (B)}$.

In closing this section, it is worth noting that in the scenario discussed in subsection \ref{sec4a}, the content of the message has to be protected from potential adversaries, and thus it is never announced  publicly. In the framework of entity authentication discussed in section \ref{sec4b}, the message is assumed to be randomly chosen from a uniform distribution over $\Int_d$, and serves as a means for the receiver to convince Alice about his identity in a single authentication session.  
Finally, in section \ref{sec4c}  the message is sent in clear together with the tag, 
because we are interested only in authenticating the origin of the message.

\section{Concluding remarks}
\label{sec5}

We have defined computational indistinguishability in the context of ideal coarse-grained boson sampling, and  have 
discussed the conditions for its validity under the assumption of a computationally unbounded verifier. 
Subsequently we defined a cryptographic 
framework, which involves a trusted key distribution center, and allows for the implementation of various cryptographic tasks  
between two legitimate users who posses independent boson samplers. 

Interestingly enough, the computational indistinguishability  we have discussed seems to be independent of whether 
the boson sampling is performed in a regime where the problem becomes intractable for classical computers. 
Mainly it relies on the unitarity of the matrix that connects the input modes to the output modes, as well as on the 
assumption that the verifier does not have access to the input configuration.   
So, the boson sampling required by the cryptographic tasks discussed in section \ref{sec4} can be performed on  a classical 
computer or a quantum chip. In the former case, it is easy to ensure the unitarity of the matrix, whereas in the 
latter case, as a result of losses, boson sampling essentially involves a linear non-unitary complex-valued 
transformation. An interesting question therefore, which goes beyond the scope of the present work, 
is whether and how the present results can be generalized to lossy and noisy boson samplers. 

In closing, it should be emphasized that the main aim of the present work was 
to discuss computational indistinguishabilily in the framework of boson sampling. To the best of our knowledge, this is the first attempt in the literature for the boson sampling to be 
connected to some sort of computational indistinguishabilily. The cryptographic setting discussed in section \ref{sec4}, is far from being practical, and cannot in any case compete with existing  cryptographic algorithms. It has to be viewed as a toy model, which demonstrates how boson sampling can, in principle, 
be used for specific cryptographic tasks. If the  computational indistinguishabilily  can be extended to lossy and noisy on-chip boson samplers, then the present toy model, with the necessary amendments,  may start getting more practical.  
 
 \section*{References}

\end{document}